\newcounter{myctr}
\begin{document}

\makeatletter
\def\@biblabel#1{[#1]}
\makeatother

\markboth{Chen Zhang and Jürgen Hackl}{Higher-Order Network Framework for Capturing Memory-Driven Mobility Dynamics}

%
\catchline{}{}{}{}{}
%

\title{INSTRUCTIONS FOR TYPESETTING MANUSCRIPTS\\
USING \LaTeX\footnote{For the title, try not to
use more than 3 lines. Typeset the title in 10 pt
Times Roman, uppercase and boldface.}
}
\title{Beyond Connectivity: Higher-Order Network Framework for Capturing Memory-Driven Mobility Dynamics}

\author{CHEN ZHANG}

\address{Department of Civil and Environmental Engineering, Princeton University, 59 Olden Street\\
Princeton, New Jersey 08544, United States\\
\email{cz2687@princeton.edu}}

\author{J\"URGEN HACKL\footnote{Corresponding author. }}

\address{Department of Civil and Environmental Engineering, Princeton University, 59 Olden Street\\
Princeton, New Jersey 08544, United States\\
\email{hackl@princeton.edu}}

\maketitle

\begin{history}
\received{(Day Month Year)}
\revised{(Day Month Year)}
\accepted{(Day Month Year)}
\published{(Day Month Year)}
\comby{(xxxxxxxxxx)}
\end{history}

\begin{abstract}
Understanding and predicting mobility dynamics in transportation networks is critical for infrastructure planning, resilience analysis, and traffic management. Traditional graph-based models typically assume memoryless movement, limiting their ability to capture sequential dependencies inherent in real-world mobility patterns. In this study, we introduce a novel higher-order network framework for modeling memory-dependent dynamics in transportation systems. By extending classical graph representations through higher-order Markov chains and de Bruijn graph structures, our framework encodes the spatial and temporal ordering of traversed paths, enabling the analysis of structurally and functionally critical components with improved fidelity. We generalize key network analytics, including betweenness centrality, PageRank, and next-step prediction, to this higher-order setting and validate our approach on the Sioux Falls transportation network using agent-based trajectory data generated with MATSim. Experimental results demonstrate that higher-order models outperform first-order baselines across multiple tasks, with the third-order model achieving an optimal balance between predictive accuracy and model complexity. These findings highlight the importance of incorporating memory effects into network-based transportation analysis and offer a scalable, data-driven methodology for capturing complex mobility behaviors in infrastructure systems.
\end{abstract}

\keywords{Higher-Order Networks, Transportation Systems, Markov Models, Network Analytics, Complex Infrastructure Systems.}

\newpage
\section{Introduction}
%
%
Transportation systems are critical infrastructures, which reliability, robustness, and efficient operation greatly affects national economics, politics and people's everyday life \cite{Wang2010Generating}.
Therefore, analyzing the structure and function of transportation networks has become an important field in science.
%
%
The foundation of transportation studies can be found in the field of graph theory.
%
%
This mathematical sub domain deals with the topology of networks, which are composed out of \textit{nodes} (or \textit{vertices}) connected by \textit{edges} (or \textit{links}).
%
%
Due to the limitation of data and computational power, studies in the early 1960s were focused mainly on simple topological and geometric properties \cite{Garrison1960CONNECTIVITY,Garrison1962structure,Kansky1963Structure,Haggett1969Network}.
%
%
In the following decades, the study of complex networks emerged as a unifying framework, with researchers from diverse fields analyzing the structure and dynamics of large-scale networks across a wide range of domains \cite{Watts1998Collective,Barabasi1999Emergence,Barabasi2003ScaleFree,Barabasi2009Linked,Newman2003Structure}.
%
%
Building on these advancements, transportation science leverages empirical data and computational methods to analyze heterogeneous, time-varying \cite{Cascetta1989stochastic}, and spatially embedded networks \cite{Hackl2017Generation}, enabling the study of both static topological properties (e.g., betweenness centralities \cite{Wang2008Betweenness}) and dynamic processes (e.g., traffic flow \cite{Hackl2019Estimation}).
%
%
Moreover, multilayer network models \cite{Kivela2014Multilayer} have emerged as powerful tools for analyzing the interdependencies between various transport modes (e.g., buses, trams, and walking paths), enabling simulations of disruptions or improvements in service\cite{Aleta2017Multilayer, Hackl2019Modelling}.
%
%
Together, these advancements render network science indispensable for addressing contemporary challenges in urban transportation planning, including multimodal integration, reliability, efficiency, and sustainability.

%
In the analysis of transportation networks, two fundamental questions arise: (1) which parts of the system are critical to its structure and function?, and (2) how do agents (e.g., people and vehicles) move through the network?
%
%
The first question is critical for network design and is addressed by analyzing static properties such as betweenness centrality \cite{Freeman1977Set}, closeness centrality \cite{Bavelas1950Communication}, and PageRank \cite{Page1998PageRank}, to identify critical nodes or edges within the network.
%
%
For instance, betweenness centrality is used as a static predictor of congestion and load on networks \cite{Kirkley2018betweenness}, closeness centrality provides insights into the spatial organization and accessibility of urban areas \cite{Crucitti2006Centralitya}, and PageRank enables the ranking of network nodes by incorporating both topological structure and external factors such as public services or commercial activities \cite{Agryzkov2012algorithm}.
Furthermore, network based importance measures are used to study resilience of transportation systems \cite{Hackl2018Determination, Liu2022Prioritizing}.
%
%
The second question, focuses on traffic flow optimization and is addressed by modeling network dynamics, using a large variety of methods developed in nonequilibrium statistical physics and nonlinear dynamics \cite{Mahnke2005Probabilistic}. 
%
%
Reaching from large-scale \textit{macroscopic} models \cite{Wang2022Macroscopic}, where the traffic movement resembles a compressible fluid, towards \textit{microscopic} models, where the focus shifts to the interactions between individuals \cite{Nguyen2021overview}. 
%
%
Due to their computational complexity, these classical models are increasingly complemented by network science approaches that offer scalable methods for analyzing traffic dynamics.
%
%
For instance, dynamic models, such as random walks\cite{Pearson1905Problem} formulated as Markov chains in graphs\cite{Besenczi2021Largescale} or models based on shortest paths \cite{Akbarzadeh2018Communicability}, are used to capture the probabilistic flow of agents through the network.
%
%
By simulating the stochastic nature of traffic, such models provide critical insights into congestion patterns, route choice behavior, and overall traffic flows with reduced computational cost \cite{Hackl2019Estimation}.
%
%
By combining insights from both static and dynamic network analyses, transportation engineers can more effectively design and optimize transportation systems.

%
%
In recent years, the rise of big data has significantly transformed transportation science, enabling researchers to move beyond small-scale models and limited empirical validation \cite{Yuan2021Survey,Haghani2023Emerging,Fan2025Advancements}.
%
%
The availability of high-resolution spatio-temporal data, such as GPS trajectories, mobile phone records, and sensor-based traffic measurements, provides unprecedented insights into mobility patterns \cite{Gonzalez2008Understanding,Chang2021Mobility}, infrastructure usage \cite{Moghtadernejad2022Datadriven}, and traveler behavior \cite{Wang2021Role}.
%
%
However, studies have shown that sequential data on graphs, such as mobility trajectories, which represent chronologically and topologically ordered sequences of nodes traversed by users, cannot be adequately modeled using simple graph structures that capture only dyadic relationships between pairs of entities \cite{Rosvall2014Memory,Lambiotte2019networks,Battiston2021physics}.
%
%
Thereby the presence of such ``higher-order'' interactions can fundamentally influence our understanding of complex systems \cite{Eliassi-Rad2021HigherOrder}.
They can fundamentally alter our understanding of node importance as measured by centrality metrics, impact the detection of clusters and community structures within graphs, and non-trivially influence dynamical processes such as diffusion and epidemic spreading, along with the effectiveness of associated control strategies \cite{Scholtes2014Causalitydriven,Salnikov2016Using,Xu2016Representing,Benson2016Higherorder}.
%
%
To address this challenge, researchers in topological data analysis, network science, and machine learning recently started to generalise network analysis to higher-order network models that capture more than dyadic relations \cite{Eliassi-Rad2021HigherOrder}. 
%
%
In transportation science, the integration of higher-order modeling frameworks that account for \textit{memory effects}, i.e. capturing the spatial and temporal ordering of interactions observed in real-world mobility data, presents a promising direction to address the inherent limitations of traditional, first-order network representations \cite{Perri2021Applications}.
%
%
By explicitly incorporating the sequential dependencies observed in empirical mobility trajectories, these models could enable a more accurate identification of structurally and functionally critical components, while offering improved representations of dynamic processes such as routing behavior and traffic flow predictions.

%
In this work, a novel higher-order network framework is presented for modeling memory-dependent dynamics in transportation systems.
This approach extends traditional graph-based representations by incorporating sequential dependencies observed in empirical mobility trajectories through higher-order Markov chains and a de Bruijn-based network structure.
In this representation, nodes correspond to subpaths of traversed road segments rather than individual intersections, allowing the encoding of both topological and temporal information within a unified graph structure.
By integrating this higher-order representation with empirically derived transition probabilities, the framework captures complex routing behavior and provides more accurate approximations of dynamic processes such as traffic flow and agent movement.
This enriched network representation enables improved predictive performance, enhances centrality-based structural analyses, and offers new insights for understanding the functional behavior of transportation networks under real-world conditions.
Specifically, this work advances the state-of-the-art in the field  of complex network science in transportation research as follows.
\begin{itemize}
\item Using a higher-order network representation for transportation data, it becomes formally feasible to account for memory effects inherent in real-world mobility trajectories. In this representation, nodes correspond to subpaths of length $k$ rather than single intersections, and edges encode transition probabilities between these memory-dependent subpaths.
\item The proposed framework extends classical network measures, such as betweenness centrality, PageRank, and next-step prediction, into higher-order variants that incorporate memory-dependent transitions, yielding a more accurate representation of flow patterns, node importance, and agent dynamics compared to traditional first-order approaches.
\item The integration of higher-order network models enables a data-driven determination of the optimal memory length and dual-scale analysis, as validated on the Sioux Falls benchmark network, where higher-order representations outperform conventional first-order models in accuracy, predictive power, and robustness.
\end{itemize}

%
The remainder of this work is structured as follows. Section~\ref{sec:methodology} introduces the mathematical foundations of higher-order network modeling, including the formalization of memory effects in transportation systems and the construction of de Bruijn-based network representations. Section~\ref{sec:analytics} extends classical network analytic techniques, such as betweenness centrality, PageRank, and next-step prediction, to higher-order models, enabling a more accurate characterization of structural importance and dynamic processes. Section~\ref{sec:application} presents an empirical application of the proposed framework to the Sioux Falls transportation network using agent-based trajectory data generated via MATSim. Section~\ref{sec:results} reports on the comparative performance of first- and higher-order models across analytical tasks, while Section~\ref{sec:discussion} discusses the methodological implications, generalizability, and limitations of higher-order representations in transportation science. Finally, Section~\ref{sec:conclusions} summarizes key findings and outlines directions for future research.

\newpage
\section{Transportation Systems as Higher-Order Networks}\label{sec:methodology}

Transportation systems are naturally represented as networks, where intersections and road segments correspond to nodes and edges, respectively.
Classical graph-based models provide a foundational framework for analyzing such systems, enabling the study of structural properties and traffic dynamics.
However, traditional models often assume memoryless movement, which fails to capture the sequential dependencies observed in real-world mobility data.
This section introduces the theoretical foundations of higher-order network representations that extend classical graph models by incorporating memory effects through higher-order Markov chains and de Bruijn graph structures.
These enriched representations provide a formal basis for modeling temporally ordered movement patterns and serve as the foundation for the analytical methods developed in subsequent sections.

\subsection{Transportation Systems as Networks}
%
%
Formally, a \textit{graph} or \textit{network} \(G=(\mathcal{V},\mathcal{E})\) is a mathematical structure, which describes a set of nodes (or vertices) \(\mathcal{V}\) and a set of edges (or links)  \( \mathcal{E} \subseteq \mathcal{V} \times \mathcal{V} \).
The nodes correspond to the objects that are modeled (e.g. crossings), the edges indicate some relation between these objects (e.g. roads).
In this work, networks are assumed to be \textit{directed}. Therefore, an edge is an ordered pair \((v,w)\in E\) indicating that \(v\) and \(w\) are directly connected and transport takes place only from \(v\) to \(w\).
The reversed assertion is not possible, unless an edge \((w,v)\) exists. In the context of transportation networks, this directed structure reflects lane-level directionality of the system.

%
A \textit{path} \(p\) in a network is defined as a sequence of nodes, \(p:= (v_1, v_2, \ldots, v_{\ell}) \; \forall v_{i} \in \mathcal{V} \) such that every consecutive pair of nodes is connected by an edge \((v_i, v_{i+1}) \in \mathcal{E}\) \(\forall i \in \{1,\dots,\ell\}\).
When no node is repeated in the sequence (i.e., \(v_i \neq v_j\) \(\forall i,j: 1 \leq i < j \leq \ell\)), the path is considered to be \textit{simple}.
The \textit{distance} \(\mathrm{d}(v,w)\) is the minimum length of any path between nodes \(v\) and \(w\), with $\mathrm{d}(v,w) := \infty \Leftrightarrow \nexists $ path from $v$ to $w$.
Among all possible paths, the \textit{shortest paths} is the path between two nodes  \(v\) and \(w\), such that no shorter path exists.
In weighted networks a path \(p=(v_1,\ldots,v_{\ell})\)  is referred to as the \textit{cheapest path} if it minimizes the total weight, i.e. \(\min_{p}\sum_{i=1}^{\ell} \mathrm{w}(v_{i}, v_{i+1})\).

%
While the computation of cheapest paths is fundamental in traffic flow modeling, identifying such paths between all pairs of nodes under dynamically varying edge weights poses significant computational challenges, particularly in large-scale networks \cite{Hackl2019Estimation}.
As an alternative, network dynamics can be modeled as stochastic processes, which offer scalable and probabilistic representations of agent movement and flow \cite{Masuda2017Random}.
Among these, random walks \cite{Pearson1905Problem} constitute a widely used class of models for capturing such stochastic dynamics on networks.

%
Formally, a random walk is defined as a stochastic process on a state space \(\Omega\).
Let \(X_t \in \Omega\) denote the state of the walker at time \(t\).
The sequence \(\{X_t\}_{t \ge 0}\) forms a random walk, with both the time parameter \(t\) and the state space \(\Omega\) being either discrete or continuous.
A particularly simple class of random walk models are based on first-order Markov chains and assume a \textit{memoryless} process, meaning the probability of moving to the next state \(s_{{t+1}}\) depends solely on the current state \(s_{{t}}\), expressed as \(P ( X_{t+1} = s_{t+1} \mid X_t=s_{t}, X_{t-1}=s_{t-1}, \ldots, X_0=s_0) = P ( X_{t+1}=s_{t+1} \mid X_t=s_{t} ) \), for all sequences of states \(s_0, \ldots, s_{t-1} \in \Omega\) and \(t \ge 0\).
For a random walk on a graph \cite{Lovasz1994Random}  \(G = (\mathcal{V}, \mathcal{E})\), where \(\Omega = \mathcal{V}\), the transition probability \(P (X_{t+1} = v_{j} \mid X_{t}=v_{i} )\) is given by the transition matrix \(\bm{T}:= \bm{A}_{ij}/\sum_{l} \bm{A}_{il}\) for each entry \(ij\) of the adjacency matrix \(\bm{A}\).

\subsection{Memory Effects in Transportation Systems}
%
%
Empirical analyses of transportation systems indicate the existence of \textit{memory effects} in routing decisions, whereby drivers' choices are influenced not only by their current location but also on the sequence of previously traversed nodes \cite{Zhu2015People, Lima2016Understanding, Manley2015Shortest}.
%
%
To capture these effects, higher-order Markov chains extend the conventional model by conditioning the transition probabilities on the ordered history of visited nodes.
Specifically, instead of modeling the transition probability \(P(s_{t+1} \mid s_t)\) as in first-order Markov chains, a higher-order Markov chain estimates \(P(s_{t+1} \mid s_{t}, \ldots, s_{t-k+1})\), where the order \(k\) represents the length of the memory encoded in the model \cite{Singer2014Detecting}.
Given a  \(k\)th-order Markov chain with transition probabilities \(P^{(k)}(s_t \mid s_{t-1}, \ldots s_{t-k}\), the likelihood of observing a path \( p= ( v_{0}, v_{1}, \ldots  v_{\ell})\) under a model \(M_k\) is given by \cite{Scholtes2017When}:
\begin{equation}\label{eq:1}
  \mathcal{L}\left(M_k \mid p\right) = \prod_{i=k}^{\ell} P^{(k)}\left(v_i \mid v_{i-1}, \ldots , v_{i-k} \right).
\end{equation}
When a multi-set \(\mathcal{P} = \{p_1, p_2, \ldots, p_N\}\) of statistically independent paths is observed, the overall likelihood of the \(k\)th-order Model \(M_{k}\) is given by
\begin{equation}\label{eq:2}
  \mathcal{L}\left(M_k \mid \mathcal{P}\right) = \prod_{j=1}^{N} \mathcal{L}\left(M_k \mid p_j\right).
\end{equation}
Parameter estimation is performed via maximum likelihood, where the transition probabilities are inferred from the empirical frequencies sub-paths in the data:
\begin{equation}\label{eq:3}
    \widehat{P}^{(k)}\left(v_i \mid v_{i-1}, \ldots , v_{i-k} \right) = \frac{\left|\{(v_{i-k}, \ldots, v_{i-1}, v_i) \in \mathcal{P}\}\right|}{\sum_{w \in \mathcal{V}}\left|\{(v_{i-k}, \ldots, v_{i-1},  w) \in \mathcal{P}\}\right|}.
\end{equation}
where the numerator counts the occurrences of the specific sequence \((v_{i-k},\ldots,v_{i})\), while the denominator sums oval all possible continuations \(w\in\mathcal{V}\) that share the same prefix sub-path \((v_{i-k},\ldots,v_{i-1})\), ensuring normalization of the transition probability.

%
The following proposition formally demonstrates that, under the assumption of independently and identically distributed (i.i.d.) path samples, the maximum likelihood estimator defined in Equation~\eqref{eq:3} provides an unbiased estimate of the underlying transition probabilities in a \(k\)th-order Markov model.

\begin{proposition}[Unbiasedness of the MLE for Transition Probabilities]
  Consider a multi-set \( \mathcal{P} = \{p_1, p_2, \ldots, p_N\} \) of statistically independent paths generated by a \(k\)th-order Markov process with true transition probabilities
  \begin{equation*}
    P^{(k)}\left(v_i \mid v_{i-1}, \ldots , v_{i-k} \right) = p.
  \end{equation*}
  Let \( Y \) denote the total number of occurrences of the prefix \((v_{i-k}, \ldots, v_{i-1}) \in \mathcal{P} \) (with \( Y>0 \)) and let \( X \) denote the number of times the sub-path \((v_{i-k}, \ldots, v_i)\) appears. Then, under the assumption that
  \begin{equation*}
    X \mid Y \sim \operatorname{Binomial}(Y, p),
  \end{equation*}
  the maximum likelihood estimator (MLE)
  \begin{equation*}
    \widehat{P}^{(k)}\left(v_i \mid v_{i-1}, \ldots, v_{i-k}\right) = \frac{X}{Y}
  \end{equation*}
  is an unbiased estimator of \( p \); that is,
  \begin{equation*}
    \mathbb{E}\left[\widehat{P}^{(k)}\left(v_i \mid v_{i-1}, \ldots, v_{i-k}\right)\right] = p.
  \end{equation*}
\end{proposition}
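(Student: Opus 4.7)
The plan is to prove unbiasedness by the law of total expectation (tower property), conditioning on $Y$. First, I would briefly motivate why the binomial structure $X \mid Y \sim \operatorname{Binomial}(Y,p)$ is the right modelling assumption: by the $k$th-order Markov property and the i.i.d.\ sampling of paths in $\mathcal{P}$, each of the $Y$ independent occurrences of the prefix $(v_{i-k},\ldots,v_{i-1})$ constitutes a Bernoulli trial whose success (continuation to $v_i$) has probability $p$, and these trials are conditionally independent given $Y$.

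Granting this, the proof proceeds in two short steps. First, I would condition on $Y$: because $1/Y$ is $\sigma(Y)$-measurable, it pulls out of the conditional expectation, giving
\begin{equation*}
\mathbb{E}\!\left[\frac{X}{Y}\;\Big|\;Y\right] = \frac{1}{Y}\,\mathbb{E}[X \mid Y] = \frac{1}{Y}\cdot Yp = p,
\end{equation*}
where the second equality uses the standard mean formula for a binomial random variable. Second, I would apply the tower property to deduce
\begin{equation*}
\mathbb{E}\!\left[\frac{X}{Y}\right] = \mathbb{E}\!\left[\mathbb{E}\!\left[\frac{X}{Y}\;\Big|\;Y\right]\right] = \mathbb{E}[p] = p,
\end{equation*}
which is exactly the stated claim.

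The only genuine subtlety, and the reason the hypothesis $Y>0$ is stated explicitly, is the well-posedness of the ratio $X/Y$. If the prefix $(v_{i-k},\ldots,v_{i-1})$ never appears in $\mathcal{P}$, the denominator in Equation~\eqref{eq:3} vanishes and the MLE is simply undefined; restricting attention to the event $\{Y>0\}$ sidesteps this degeneracy. Beyond that, there is no real technical obstacle: both computational steps are one-line invocations of standard properties of conditional expectation. If anything, the main \emph{conceptual} step — the one I would flag as most worth justifying carefully — is the reduction of the estimator's distribution to a binomial, since it is there that the Markov assumption and the independence of paths in $\mathcal{P}$ actually enter the argument.
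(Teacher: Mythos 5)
Your proof is correct and follows exactly the same route as the paper's: compute $\mathbb{E}[X/Y \mid Y] = p$ from the binomial conditional mean, then apply the law of total expectation. The additional remarks on why the binomial assumption is justified and on the role of the hypothesis $Y>0$ go slightly beyond what the paper writes, but the core argument is identical.
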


\begin{proof}
  Since \( X \mid Y \sim \operatorname{Binomial}(Y, p) \), it follows that
  \begin{equation*}
    \mathbb{E}[X \mid Y] = Y \, p.
  \end{equation*}
  Therefore, the conditional expectation of the estimator is given by
  \begin{equation*}
    \mathbb{E}\left[\frac{X}{Y} \,\bigg| \, Y\right] = \frac{\mathbb{E}[X \mid Y]}{Y} = \frac{Y \, p}{Y} = p.    
  \end{equation*}
  Taking the total expectation over \( Y \) using the law of total expectation, we have
  \begin{equation*}
  \mathbb{E}\left[\widehat{P}^{(k)}\left(v_i \mid v_{i-1}, \ldots, v_{i-k}\right)\right] = \mathbb{E}\left[\mathbb{E}\left[\frac{X}{Y} \, \bigg| \, Y\right]\right] = \mathbb{E}[p] = p.    
  \end{equation*}
\end{proof}

\subsection{Higher-Order Network Representation}
%
To formally capture the memory effects inherent in real-world mobility trajectories using higher-order Markov chains, it is necessary to encode both topological and temporal information within a unified network representation.
%
%
A second-order memory model (\(k=2\)) can be interpreted as a line graph \(L(G)\) of the original first-order network \(G = (\mathcal{V}, \mathcal{E})\), where each node in \(L(G)\) corresponds to a directed edge \((v_i, v_j) \in \mathcal{E}\) in \(G\), and an edge exists between two nodes \(((v_i, v_j), (v_j, v_k)) \in \mathcal{E}_{L(G)}\) if the corresponding transitions are path-continuous.
This transformation implies that each state in the second-order model contains information not only about the current position \(v_j\) but also about the preceding location \(v_i\).
%
%
Extending this reasoning, a third-order memory model (\(k=3\)) corresponds to the line graph of the line graph, \(L^2(G)\), where nodes now encode subpaths of the form \((v_i, v_j, v_k)\), thus capturing the current location as well as the two preceding steps.
%
%
More generally, a \(k\)th-order memory model corresponds to a \textit{de Bruijn graph} \cite{Bruijn1946combinatorial} of order \(k\), where each node represents a sequence of \(k\) consecutive nodes, and directed edges connect overlapping subpaths, encoding valid transitions observed in empirical trajectories.

%
Based on this a higher-order network model with path-dependencies can be formulated \cite{Scholtes2017When,Qarkaxhija2022Bruijn}, where for a chosen order \(k\) the higher-order nodes are defined as:
%
%
\begin{equation}
  \label{eq:4}
  v^{(k)}:= \langle v_{1},v_{2},\ldots,v_{k} \rangle  = \left\{\left(v_1, v_2, \ldots, v_k\right) \mid v_i \in \mathcal{V}\right\}
\end{equation}
where \(\left(v_1, v_2, \ldots, v_k\right)\) is a possible path in \(G\).
%
%
Thereby, higher-order edges \((v^{(k)},w^{(k)})\) capture transitions between these paths.
Let \(v^{(k)}= \langle v_{1},v_{2},\ldots,v_{k} \rangle\)  and \(w^{(k)}= \langle w_{1},w_{2},\ldots,w_{k} \rangle\) be two \(k\)th-order nodes, an higher-order edge exists if \(v^{(k)}\) and \(w^{(k)}\) are equal except for the first entry of  \(v^{(k)}\) and the last entry of \(w^{(k)}\), such that \(v^{(k)}_{i+1} = w^{(k)}_{i} \;\forall i=1,2,\ldots,k-1\).
The resulting edge is a sequence of
\begin{equation}
  \label{eq:5}
  (v^{(k)},w^{(k)}) = (\langle v_{1},v_{2},\ldots,v_{k} \rangle , \langle v_{2},\ldots,v_{k}, v_{k+1} \rangle ) = ( v_{1},v_{2},\ldots,v_{k},v_{k+1}).
\end{equation}
The weight of this edge can be defined by the conditional probability that captures the memory effect:
\begin{equation}
  \label{eq:6}
  \text{w} \left(\langle v_{1},v_{2}, \ldots , v_{k} \rangle , \langle v_{2}, \ldots, v_{k}, v_{k+1} \rangle\right):= P \left( v_{k+1} \mid v_{k}, \ldots, v_{2}, v_{1} \right).
\end{equation}

%
Two primary approaches exist for constructing higher-order networks that capture sequential memory: (1) topology-based methods, which derive higher-order structures from the graph’s connectivity, and (2) path-based methods, which infer memory-dependent transitions directly from observed trajectory data.
%
%
In the topology-based approach, a \(k\)th-order network is constructed by representing each unique sequence of \(k\) consecutive nodes, extracted from paths in the first-order network, as a node in the corresponding de Bruijn graph \cite{Compeau2011How}.
This representation encodes memory effects by explicitly modeling transitions between overlapping subpaths of length \(k\), thereby conditioning each step on the \(k-1\) preceding states.
While this approach can be used without any additional knowledge on network dynamics, it suffers from scalability limitations, as the number of nodes and edges grows significantly with \(k\), resulting in substantial memory and computational demands for large-scale networks or higher-order representations.
Recent advances have partially mitigated these challenges through the use of probabilistic data structures and minimizer-based techniques \cite{Pell2012Scaling, Rautiainen2021MBG}, yet scalability remains a critical concern in applications involving deep memory structures.
%
%
When path-level data from the system is available, a higher-order model can be directly inferred from observed trajectories. To capture the sequential context inherent in such data, a fixed-length sliding window of size \(k+1\) is applied to each trajectory, extracting contiguous node sequences of the form \((v_1, v_2, \ldots, v_k, v_{k+1})\).
Each subpath of length \(k\), such as \((v_1, v_2, \ldots, v_k)\), is treated as a distinct higher-order node, effectively encoding the memory of the preceding steps.
Connectivity between higher-order nodes is then established by identifying overlaps between successive subpaths.
Specifically, a directed edge is introduced from the node representing \((v_1, v_2, \ldots, v_k)\) to the node representing \((v_2, v_3, \ldots, v_{k+1})\), thereby encoding the transition observed in the data.
This approach enables the encoding of empirical trajectory data into a higher-order network structure in a computationally tractable manner, as it requires consideration only of observed subpaths rather than all combinatorially possible sequences.

%
A fixed $k$th-order model assume a uniform memory length across all transitions, which can impose restrictive assumptions on the underlying dynamics.
In practice, the amount of available historical context often varies along a trajectory. 
For instance, at the beginning of a trajectory, insufficient historical context may prevent the application of higher-order dependencies, necessitating the use of lower-order transition probabilities.
Furthermore, in forecasting tasks such as next-step prediction, both short-term and long-term dependencies can influence the prediction accuracy.
To address this limitations, Scholtes \cite{Scholtes2017When} proposed a multi-order framework that allows for the simultaneous consideration of multiple correlation lengths.
Moreover, this approach provides statistically based method for selecting the appropriate model order \(k\) based on the observed data, by conducting a likelihood ratio test across different model orders. Additional information is provided in \ref{app:optimal-order}.

\begin{figure}[h]
  \centering
  \includegraphics[width=0.8\textwidth]{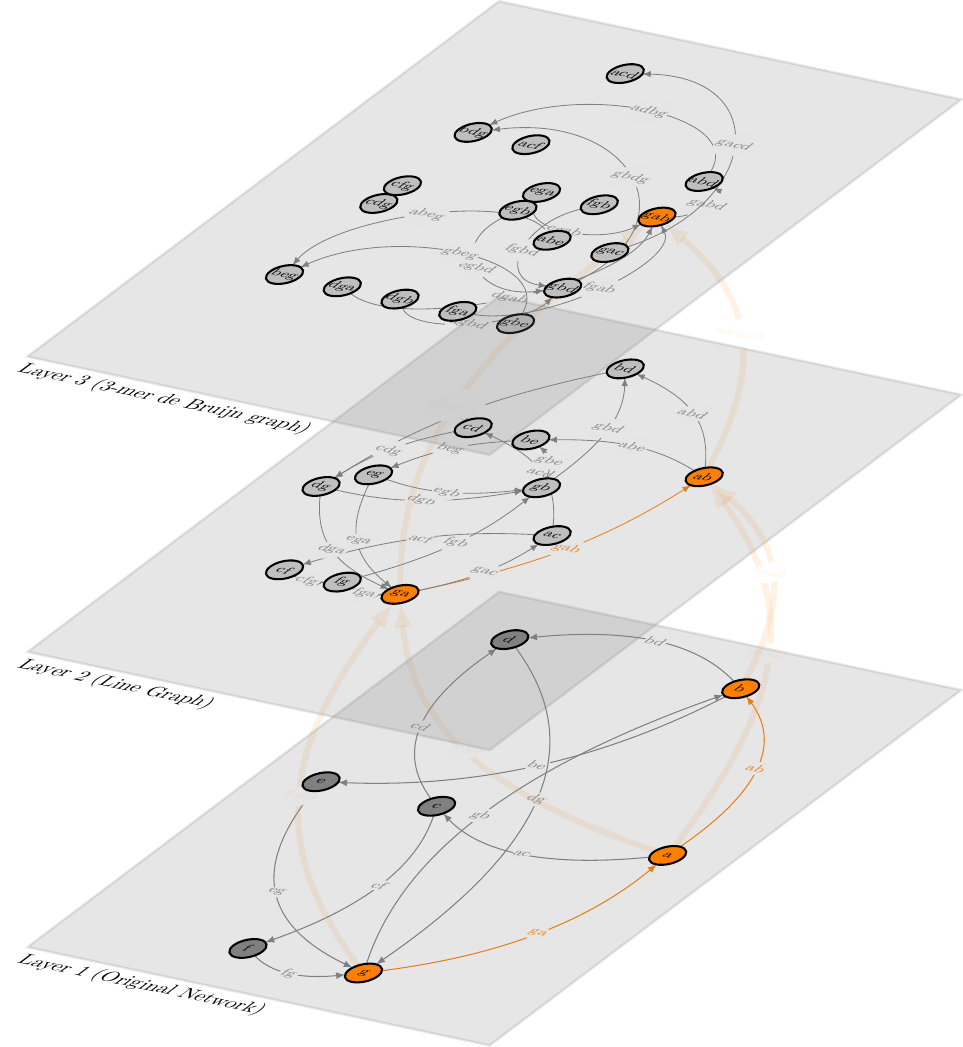}
  \caption{Hierarchical, bottom-up construction of a Higher-Order Network (HON). At layer 1 (bottom), the standard network representation is shown with vertices from $\mathcal{V}$ and their directed edges. Layer 2 (middle) is the line graph constructed from the first layer by converting each edge into a node; if two edges in the first layer are connected (i.e., share a common vertex), an edge is drawn between their corresponding nodes. For example, the edge $(g,a)$ and the edge $(a,b)$ become the line graph nodes $\langle g,a \rangle$ and $\langle a,b \rangle$, respectively. The third layer is a 3-mer De Bruijn graph, where higher order nodes such as $\langle g,a,b \rangle$ are formed by aggregating connected nodes from the line graph. Interlayer edges illustrate how nodes from the original network are successively aggregated to construct the higher order representations.
  }
  \label{fig:hon-construction}
\end{figure}

\section{Higher-Order Network-Analytics}\label{sec:analytics}
%
%
%
Building upon the higher-order representation of transportation systems, this section introduces analytical methods that generalize classical network metrics and predictive models to account for memory-dependent dynamics.
By incorporating sequential dependencies into the analysis, these methods enable a more accurate characterization of node importance, flow distributions, and agent trajectories in complex transportation networks.
Specifically, we focus on three key extensions: (1) higher-order betweenness centrality, (2) higher-order PageRank, and (3) next-step prediction based on higher-order transition structures.

\subsection{Higher-Order Betweenness Centrality}
%
%
Betweenness centrality is a classical network metric that quantifies the importance of a node based on its role in facilitating shortest-path flows between other pairs of nodes.
%
%
Formally, the betweenness centrality \(C_B(v)\) of a node \(v \in \mathcal{V}\) in a graph \(G = (\mathcal{V}, \mathcal{E})\) is defined as \cite{Freeman1977Set}: \(C_B(v)=\sum_{s \neq v \neq t} \frac{\sigma_{st}(v)}{\sigma_{st}} \;\forall s, t \in \mathcal{V}\), where \(\sigma_{st}\) denotes the total number of shortest paths between nodes \(s\) and \(t\), and \(\sigma_{st}(v)\) is the number of those paths that pass through node \(v\).
%
%
In the context of transportation systems, betweenness centrality serves as a static proxy for identifying structurally critical intersections or road segments that are likely to experience high volumes of traffic or congestion \cite{Wu2022spatial, Xiang2023Estimation}. Nodes with high betweenness centrality often represent potential bottlenecks or hubs of flow, and their identification is instrumental for tasks such as network design, vulnerability assessment, and resilience analysis \cite{Kirkley2018betweenness}.
%
%
A major limitation of classical betweenness centrality is that it considers all shortest paths between all node pairs, regardless of actual origin-destination demand, and assumes that all agents follow strictly shortest-path routing, which does not reflect the heterogeneity and variability of real-world travel behavior \cite{Zhu2015People,Williams2022shape}.

%
To overcome the limitations of classical betweenness centrality, we extend the concept to higher-order networks, allowing for the incorporation of memory-dependent routing behavior.
Analogous to its first-order counterpart, the higher-order betweenness centrality of a node \(v \in \mathcal{V}\) is defined as
\begin{equation}
  \label{eq:7}
  C_B^{(k)}(v) = \sum_{s \neq v \neq t} \frac{\sigma_{st}^{(k)}(v)}{\sigma_{st}^{(k)}} \quad \forall s,t,v\in \mathcal{V},
\end{equation}
where \(\sigma_{st}^{(k)}\) denotes the total number of shortest paths between origin \(s\) and destination \(t\) in the \(k\)th-order network, and \(\sigma_{st}^{(k)}(v)\) counts how often the first-order node \(v\) appears in these higher-order paths.
This formulation accounts for the sequential context encoded in higher-order paths while still attributing centrality scores to the original nodes in the transportation network.

\subsection{Higher-Order PageRank}
%
%
PageRank is a network centrality measure that evaluates the relative importance of nodes based on the structure of incoming links and the importance of the linking nodes.
Originally developed for ranking web pages, it is defined as the stationary distribution of a random walk with teleportation on a directed graph. Formally, the PageRank vector \(\bm{r}\) satisfies \cite{Page1998PageRank}:
\(
\bm{r} = \bm{r} \left(\alpha \, \bm{T} + (1 - \alpha) \frac{1}{n} \bm{E}  \right)
\),
where \(\bm{T}\) is the row-stochastic transition matrix derived from the adjacency matrix \(\bm{A}\), \(\alpha \in (0,1)\) is the teleportation (or damping) factor, and \(\bm{E}\) matrix of all ones, such that \(\frac{1}{n}\bm{E}\) ensuring ergodicity.
In transportation systems, PageRank can be used to identify important locations (e.g., intersections, transit stations) not only based on direct connectivity but also by accounting for the recursive influence of neighboring nodes \cite{Jiang2009Ranking}.
Unlike betweenness centrality, which emphasizes shortest paths, PageRank captures the long-term visitation frequency of a random walker and is particularly useful for evaluating accessibility, planning transit-oriented development, and prioritizing infrastructure investments \cite{Agryzkov2012algorithm}.

%
While the classical PageRank algorithm assumes memoryless dynamics, it can be extended to a higher-order network representation by computing the higher-order PageRank vector \(\bm{r}^{(k)}\) as the stationary distribution of a random walk on the higher-order graph.
Specifically, the vector \(\bm{r}^{(k)} \in \mathbb{R}^m\) satisfies the equation:
\begin{equation}
  \label{eq:8}
  \bm{r}^{(k)} = \bm{r}^{(k)} \left( \alpha \, \bm{T}^{(k)} + (1 - \alpha) \frac{1}{m} \bm{E} \right),
\end{equation}
where \(\bm{T}^{(k)}\) is the row-stochastic transition matrix of the higher-order network \(G^{(k)}\).
%
%
To obtain a ranking over the original first-order nodes, the higher-order PageRank scores are projected by mapping each higher-order node to its last component, which corresponds to the walker's current position in the encoded trajectory.
Formally, let \(\mathcal{V}^{(k)}\) denote the set of higher-order nodes, where each node \(v^{(k)} \in \mathcal{V}^{(k)}\) represents a path \(p = (v_{1} , v_{2} , \ldots , v_{k})\).
The aggregated first-order PageRank score \(\tilde{r}(v)\) for a node \(v \in \mathcal{V}\) is computed as:
\begin{equation}
  \label{eq:9}
  \tilde{r}(v) = \sum_{\substack{v^{(k)} \in \mathcal{V}^{(k)} \\ \text{last}(v^{(k)}) = v}} r^{(k)}(v^{(k)}),
\end{equation}
where \(r^{(k)}(v^{(k)})\) is the PageRank score of the higher-order node \(v^{(k)}\), and \(\text{last}(v^{(k)}) = v_{k}\) refers to the final element in the subpath represented by \(v^{(k)}\).
This aggregation reflects the stationary probability of the walker arriving at node \(v\) via all memory-dependent paths, thereby incorporating sequential dependencies into the node ranking.

\subsection{Next-Step Prediction}
%
%
Beyond structural analysis, higher-order network models also enable more accurate predictive tasks, such as next-step prediction, where the goal is to estimate the most likely subsequent node in a trajectory given its recent history.
For an individual path \( p = \left(v_1, v_2 , \ldots v_\ell\right) \), the likelihood under a multi-order Markov model with maximum order \(K\) is given by
\begin{equation}
  \label{eq:10}
\bar{P}^{(K)}(p) = \prod_{k<K} P^{(k)}\left(v_k \mid v_{k-1}, \ldots, v_1 \right) \times \prod_{K \leq i \leq \ell} P^{(K)}\left(v_i \mid v_{i-1}, \ldots, v_{i-K}\right),  
\end{equation}
where \(P^{(k)}\) denotes the transition probability under a \(k\)th-order Markov model.
The first product accounts for the initial steps of the path, where limited history requires the use of lower-order models (\(k < K\)).
Once a full context of length \(K\) is available, the second product applies the \(K\)th-order transition probabilities for the remainder of the sequence.
Given a collection of observed paths \(\mathcal{P}\), we can compute the posterior probability of a candidate path \(p\) using Bayes' theorem:
\begin{equation}
  \label{eq:11}
  P(p \mid \mathcal{P}) \propto \bar{P}^{(K)}(p) \cdot P(p),
\end{equation}
where \(\bar{P}^{(K)}(p)\) is the likelihood under the multi-order model, and \(P(p)\) is a prior distribution reflecting domain-specific knowledge, such as known routing constraints, behavioral tendencies, or infrastructure limitations.

Specifically, our next-step prediction is performed by computing the conditional transition probability \(P(v_i \mid v_{i-1}, \ldots, v_{i-K})\) for the next node in the sequence, and then selecting the node with the highest probability; while the prior \(P(p)\) can be incorporated to reflect domain-specific constraints, in our implementation we rely primarily on the learned transition probabilities derived from observed path frequencies.
Future work could leverage the non-attributed network as the prior \(P(p)\) to guide predictions, potentially incorporating domain-specific routing constraints to further enhance prediction accuracy; detail will be discussed in the Discussion section.

\begin{figure}[htb]
  \centering
  \includegraphics[width=0.5\linewidth]{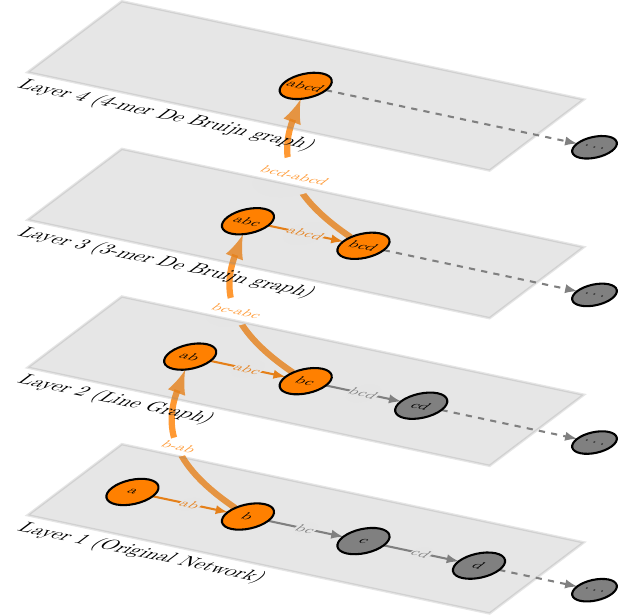}
  \caption{Aggregation Multi-Order Network that visualizes a random walk process across four layers. In Layer 1 (Original Network), the random walk begins by moving from node $a$ to node $b$ along the edge $(a,b)$. Following this step, the process transitions upward: the edge $(a,b)$ is aggregated into the node $\langle a,b \rangle$ in Layer 2 (Line Graph), where the random walk continues. Similarly, in Layer 3 (3-mer De Bruijn graph), higher order nodes (e.g., $\langle a,b,c \rangle$) are formed by aggregating connected nodes from the previous layer, and the random walk proceeds on this level. Finally, in Layer 4 (4-mer De Bruijn graph), the process further aggregates to form nodes like $\langle a,b,c,d \rangle$. Dashed inter-layer edges indicate the transition of the random walk from one layer to the next, while gray dummy nodes at the end of each layer denote that the network continues beyond the displayed segment.
  }
  \label{fig:multi-order}
\end{figure}

\newpage
\section{Application: Sioux Falls Road Networks}\label{sec:application}
To demonstrate the practical applicability and effectiveness of the proposed higher-order network framework, we apply our methodology to the well-known Sioux Falls road network \cite{Leblanc1975Algorithm}, a benchmark dataset widely used in transportation research for evaluating routing algorithms and traffic flow models \cite{Stabler2025Transportation}.
Using synthetic trajectory data generated via agent-based simulations, we construct both first-order and higher-order network models and evaluate their ability to capture structural and dynamic properties of the system.
This application enables a systematic comparison of model performance across multiple analytical tasks, including centrality estimation and next-step prediction, under controlled and reproducible conditions.

\begin{figure}[h]
    \centering
    \begin{subfigure}[b]{0.47\textwidth}
        \centering
        \includegraphics[width=\textwidth]{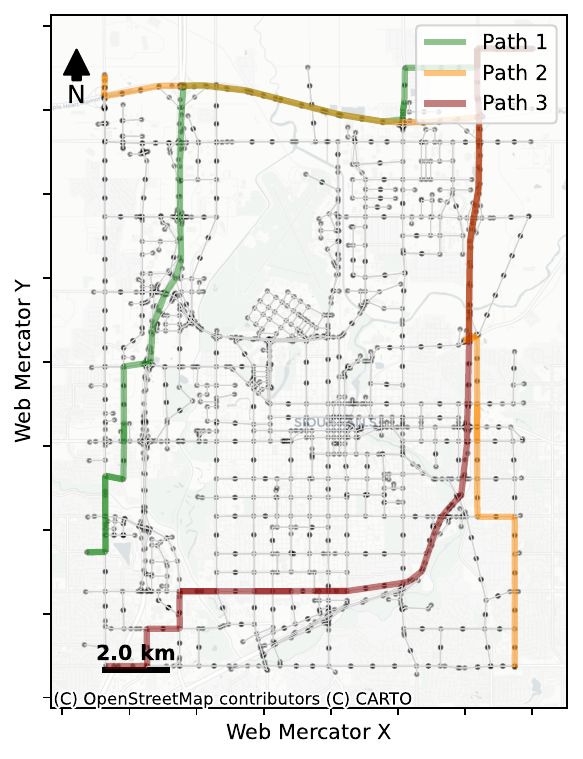}
        \caption{Network visualization of Sioux Falls network.}
        \label{fig:sf-network}
    \end{subfigure}
    \hfill
    \begin{subfigure}[b]{0.51\textwidth}
        \centering
        \includegraphics[width=\textwidth]{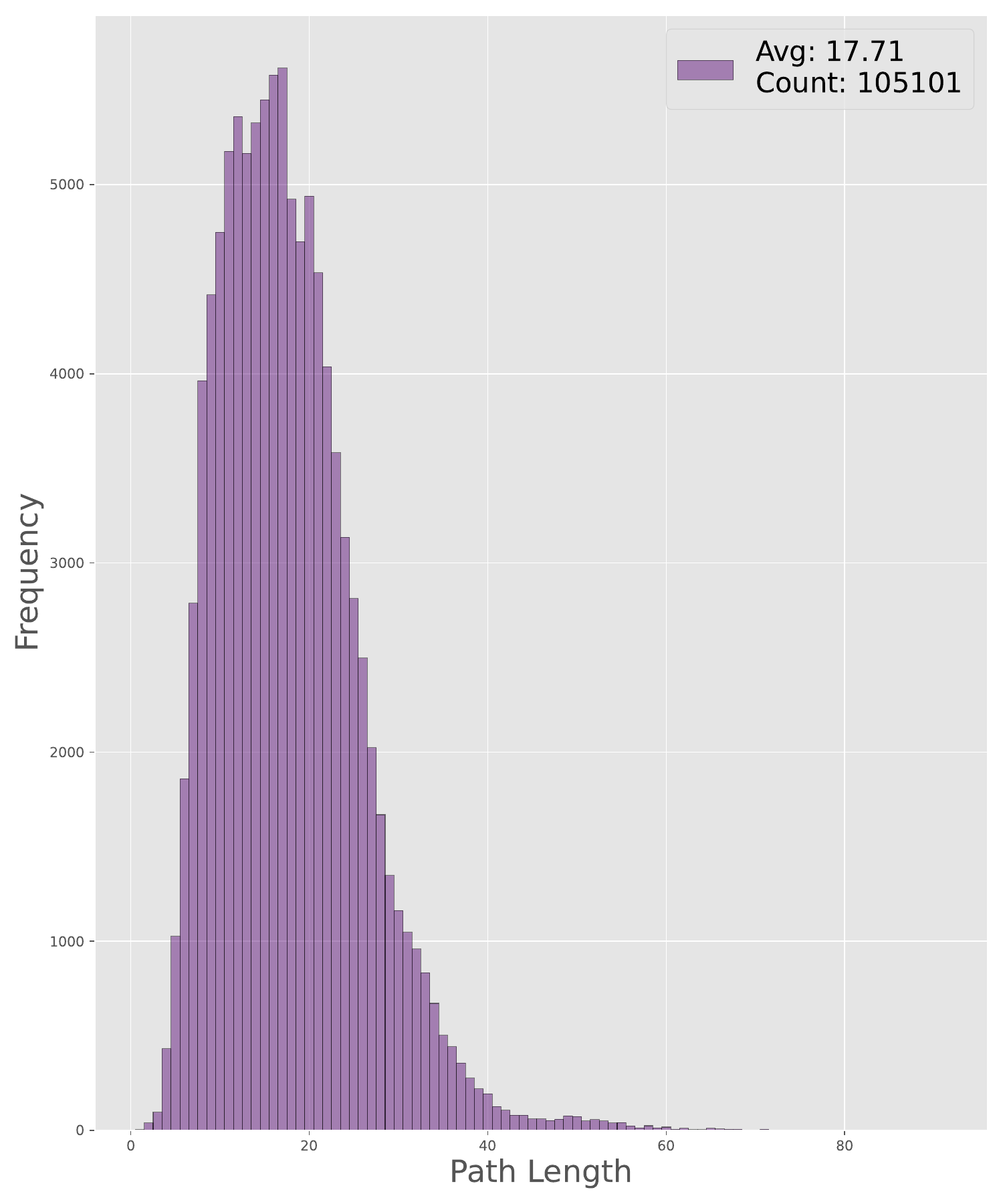}
        \caption{Distribution of path lengths for Sioux Falls road networks.}
        \label{fig:sf-paths}
    \end{subfigure}
    \caption{Overview of the application example. (a) Visualization of the Sioux Falls road network where intersections and road segments form the underlying topology. Three distinct paths are highlighted to illustrate example travel routes. (b) Distribution of path lengths computed from 105101 paths with an average length of 17.71. The histogram shows that most paths are relatively short, with a sharp drop-off in frequency for longer paths and a long tail extending toward path lengths of about 80. 
}
    \label{fig:path_length_and_network}
\end{figure}

\subsection{Trajectory Data}
%
%
We evaluate the performance of the proposed higher-order network framework under controlled and reproducible conditions by generating synthetic trajectory data through agent-based simulations on the Sioux Falls network.
%
%
Despite being synthetic, this dataset is widely recognized and validated within the transportation science and engineering community \cite{Chakirov2014Enriched,Hackl2019Estimation}, making it a suitable benchmark for studying routing behavior and traffic flow dynamics while mitigating data availability and privacy concerns.
%
%
While the original Sioux Falls network comprises 24 nodes and 76 edges \cite{Leblanc1975Algorithm}, this work utilizes an enriched network representation with 1,810 nodes and 2,494 edges.
In this representation, nodes correspond to intersections and locations of interest (e.g., homes, shops, workplaces), while edges represent directed road segments.
An overview of the network topology is provided in Figure~\ref{fig:sf-network}, and key network metrics are summarized in Table~\ref{tab:sf-network-metrics}.

\begin{table}[htbp]
  \centering
  \footnotesize
  \caption{Sioux Falls Network Metrics.}
  \label{tab:sf-network-metrics}
  \begin{tabular}{lc}
    \toprule
    \textbf{Metric} & \textbf{Value} \\
    \colrule
    Number of Nodes & 1810 \\
    Number of Edges & 2494 \\
    Average Degree & 2.7558 \\
    Density & 0.001523 \\
    Diameter & 72 \\
    Average Shortest Path Length & 28.2426 \\
    Average Clustering Coefficient & 0.036501 \\
    Global Efficiency & 0.050455 \\
    \botrule
  \end{tabular}
\end{table}

%
%
The reproduction of individual daily mobility patterns is conducted using the open-source Multi-Agent Transport Simulation (MATSim) framework \cite{Horni2016MultiAgent}.
MATSim implements an iterative, agent-based methodology for dynamic traffic assignment, integrating a queue-based traffic flow model that captures time-dependent travel times, spillback effects, and modal interactions \cite{Hackl2019Epidemic}.
Agent behavior is governed by a co-evolutionary algorithm, wherein individuals iteratively adjust their travel choices, such as routes, modes, and departure times, to maximize the utility of their daily activity schedules \cite{Nagel2009Agentbased}.
These decisions are evaluated using a random utility model, and through repeated iterations, the system converges toward an agent-based Stochastic User Equilibrium \cite{Chakirov2014Enriched}.
%
%
For the Sioux Falls network, a total of 54,546 agents are simulated over the course of a single day, resulting in 105,101 recorded mobility trajectories that capture individual travel behavior and route choices.
A statistical summary of the resulting path lengths is presented in Figure~\ref{fig:sf-paths}.

\subsection{Experimental Setup}
%
%
Leveraging the synthetic trajectory data derived from the Sioux Falls network, this study addresses two central research questions:
(i) to what extent do higher-order network models enhance the representation of structural and dynamic properties in transportation systems compared to first-order models,
(ii) how does the incorporation of empirical mobility patterns affect the performance of both first- and higher-order representations.
%
%
To address these questions, we construct two baseline first-order models.
%
%
A \textit{non-attributed} model, which encodes only the unweighted topological structure of the network, and an \textit{attributed} model, which integrates empirical edge traversal frequencies as weights, thereby reflecting observed mobility patterns.
%
%
Extending this approach, four higher-order models corresponding to memory orders \(k = 2\) through \(k = 5\) are derived from the trajectory data. Each model is implemented in both non-attributed and attributed variants.
%
%
The non-attributed models retain only the structural connectivity inferred from the occurrence of consecutive node sequences, while attributed models incorporate empirical traversal frequencies as weights between higher-order nodes.

%
%
The experimental evaluation focuses on three core analytical tasks: (1) betweenness centrality, (2) PageRank, and (3) next-step prediction.
Each method is adapted to both first- and higher-order network representations to assess their effectiveness in capturing structural importance, flow dynamics, and predictive accuracy.
For the betweenness centrality analysis, edge weights are defined as the negative logarithm of conditional transition probabilities, assigning lower costs to more frequently traversed paths.
In contrast, the PageRank and next-step prediction analyses directly use the conditional transition probabilities as edge weights, reflecting the empirically derived likelihood of movement between nodes or subpaths.
\begin{figure}[htbp]
    \centering
    \includegraphics[width=\linewidth]{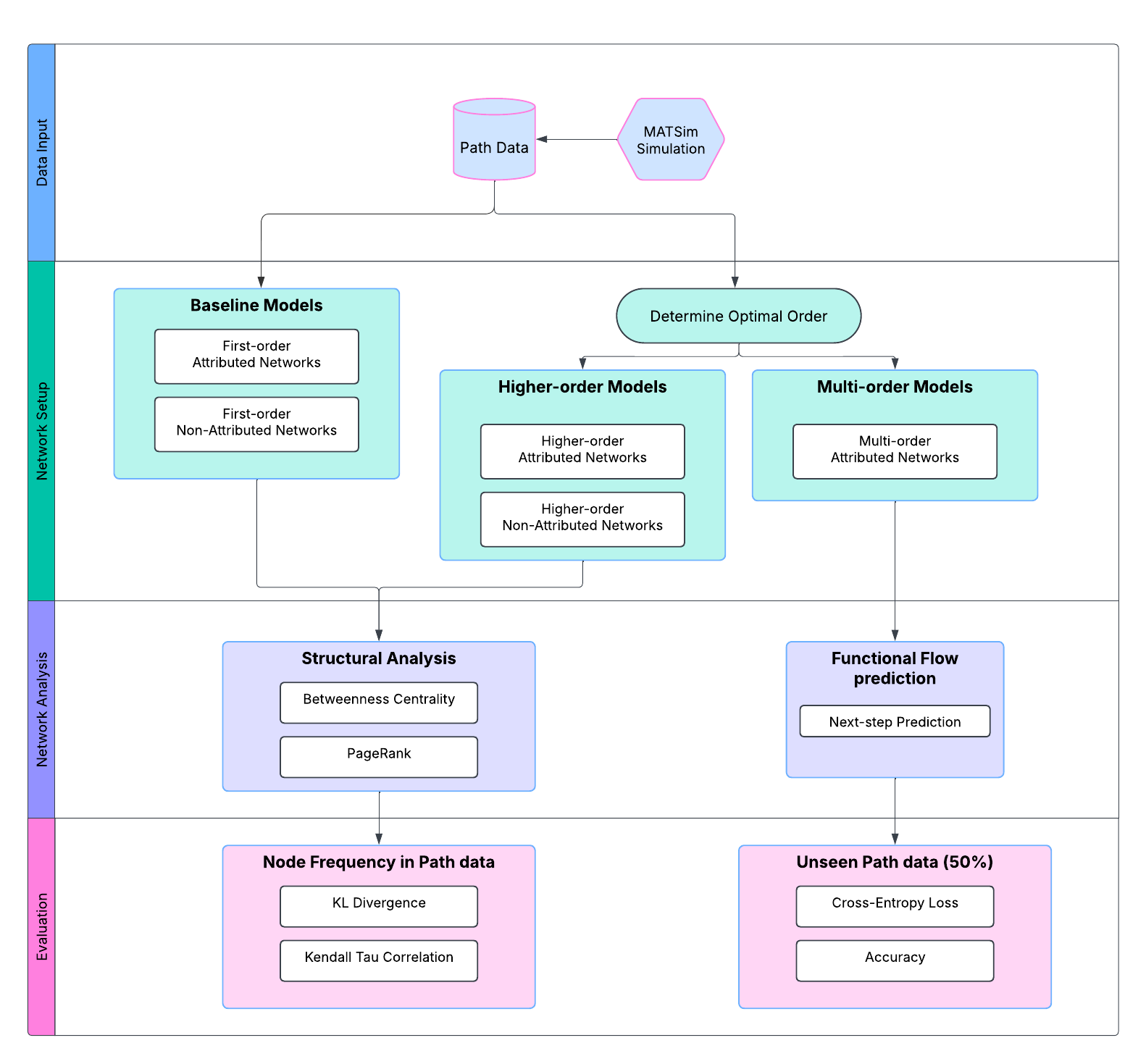}
    \caption{Experimental Workflow.
    A schematic overview of the modeling pipeline used to analyze the Sioux Falls transportation network. The framework addresses three research questions by constructing baseline (first-order attributed and non-attributed) and higher-order models. Higher-order models are built after determining the optimal order, using either transition probabilities (attributed) or path existence (non-attributed). All models are analyzed via Betweenness Centrality, PageRank, and next-step prediction (attributed only), and evaluated using KL divergence, Kendall’s tau, cross-entropy loss, and accuracy.}
    \label{fig:workflow}
\end{figure}
%
%
Model performance is evaluated against ground truth derived from the observed trajectory data.
For betweenness centrality, the ground truth corresponds to the frequency with which a node \(v\) (in first-order models) or a subpath \(v^{(k)}\) (in higher-order models) is traversed along paths between origin-destination pairs.
For PageRank, ground truth is based on node visitation frequencies aggregated over all simulated agent trajectories.
The quality of both centrality measures is assessed using two complementary metrics: Kullback-Leibler (KL) divergence \cite{Kullback1951Information}, which quantifies the discrepancy between the predicted and empirical frequency distributions, and Kendall's tau rank correlation \cite{Kendall1938New}, which evaluates the alignment between predicted and actual node rankings.
For the next-step prediction task, a set of 983,110 subpaths with known subsequent nodes is sampled from the trajectory data, using a 50\%/50\% train-test split. Predictive performance is evaluated using Cross-Entropy Loss to capture probabilistic accuracy, and Prediction Accuracy to measure correct next-step classification.

%
%
An overview of the experimental workflow is presented in Figure~\ref{fig:workflow}, while detailed descriptions of the evaluation metrics are provided in \ref{app:evaluation}.

\section{Results}\label{sec:results}
%
%
The results of the previously introduced analytical tasks are presented to demonstrate the applicability of higher-order network models in transportation research and to outline future directions for methodological and empirical advancements.
The proposed framework evaluates first-order and higher-order network models on the Sioux Falls benchmark using agent-based mobility data.
Both non-attributed and attributed empirically weighted variants are assessed across three tasks: betweenness centrality, PageRank, and next-step prediction.
Ground truth is derived from simulated trajectories, enabling quantitative comparison via rank correlation, divergence metrics, and predictive accuracy.

\paragraph{Optimal-Order.}
The optimal order is determined by applying consecutive likelihood ratio tests to multi-order models derived from pathway data, selecting the highest order \(k^{*}\) for which the improvement over the preceding model is statistically significant \cite{Scholtes2017When}.
For the Sioux Falls network, the analysis identifies an optimal order of \(k^{*} = 3\).
This suggests that incorporating a memory of the past three steps captures the dominant sequential dependencies in the mobility data.
Higher-order models (\(k > 3\)) introduce unnecessary model complexity, leading to overfitting by capturing noise and transitions that do not generalize.
In contrast, lower-order models (\(k < 3\)) fail to capture the temporal dependencies present in the observed trajectories, resulting in an underfitted model.
Consequently, we expect the third-order model to perform among the best in the subsequent evaluation tasks.

\paragraph{Higher-Order Betweenness Centrality.}
The evaluation of betweenness centrality demonstrates that higher-order network models outperform the first-order baseline, with the optimal third-order model showing the best overall performance (see Table~\ref{tab:bc-comparison}).
Kendall's tau correlation improve from 0.185 in the first-order network to 0.452 in the third-order network. 
KL divergence decrease from 1.597 in the first-order network to 0.034 in the third-order network. 
After the third order, performance slightly decrease or yield, with Kendall's tau at 0.440 and KL divergence at 0.039 for the fourth-order network. 
A similar trend is observed for non-attributed networks.
Kendall's tau correlation increases from 0.242 in the first-order network to 0.425 in the third-order network, with a marginal improvement at the fourth order (0.448) before decreasing at the fifth. 
KL divergence shows the lowest value at the third order (0.057), down from 0.367 in the first-order model, and increases slightly to 0.071 at the fourth order.

\begin{table}[htb]
  \centering
  \footnotesize
    \caption{Comparison of Betweenness Centrality by Orders}
    \label{tab:bc-comparison}
    \begin{tabular}{c|cc|cc}
    \toprule
    \multirow{2}{*}{\textbf{Order (k)}} & \multicolumn{4}{c}{\textbf{Betweenness Centrality}} \\ 
     & \multicolumn{2}{c|}{\textbf{KL Divergence}} & \multicolumn{2}{c}{\textbf{Kendall's Tau}} \\
     & Non-attributed & Attributed & Non-attributed & Attributed \\
    \midrule
    1 & 0.367 & 1.597 & 0.242 & 0.185  \\
    2 & 0.087 & 0.063 & 0.281 & 0.237  \\
    \textbf{3 (Optimal)} & \textbf{0.057} & \textbf{0.034} & 0.425 & \textbf{0.452} \\
    4 & 0.071 & 0.039 & \textbf{0.448} & 0.440  \\
    5 & 0.077 & 0.063 & 0.437 & 0.439 \\
    \bottomrule
    \end{tabular}
\end{table}

Higher-order networks generally outperform the first-order network, as illustrated in Figure~\ref{fig:bc-comparison}. 
For attributed networks, Kendall's tau correlation increase from 0.185 in the first-order network to 0.452 in the third-order network, and KL divergence decrease from 1.597 to 0.034. 
Non-attributed higher-order networks also show better performance than the first-order network. 
Kendall's tau correlation improved from 0.242 in the first-order network to 0.425 in the third-order network, and KL divergence decreased from 0.367 to 0.057.

\begin{figure}[htb]
    \centering
    \begin{subfigure}[h]{0.49\linewidth}
        \centering
        \includegraphics{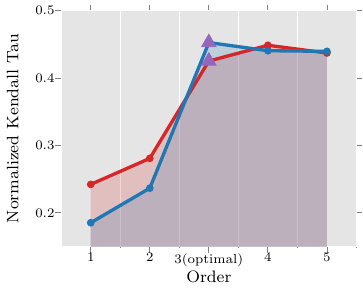}
        \caption{Kendall's tau correlation}
        \label{fig:bc-comparison-a}
    \end{subfigure}
    \hfill
    \begin{subfigure}[h]{0.49\linewidth}
        \centering
        \includegraphics{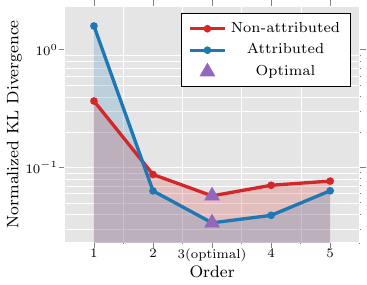}
        \caption{KL divergenc}
        \label{fig:bc-comparison-b}
    \end{subfigure}
    \caption{Comparison of betweenness centrality measures across different network orders (\(k=1\) to \(k = 5\)). (a) Kendall's tau correlation and (b) KL divergence simulated results and observed ground truth node frequencies. Higher-order models consistently outperform the first-order baseline, with significant improvement at the third-order.}
    \label{fig:bc-comparison}
\end{figure}

Attributed networks generally show higher correlations based on Kendall's tau. 
For example, Kendall's tau for attributed networks improve from 0.185 in the first-order network to 0.452 in the third-order network, while non-attributed networks improve from 0.242 to 0.425. 
However, attributed networks exhibit larger KL divergence in the first-order network (1.597) compared to the non-attributed network (0.367), suggesting that path data might be misleading for first-order networks. 
As the order increase, attributed networks showed more similar values compared to the ground truth, with the third-order network model achieving the best scores. 
Comparison through kernel density estimation (Figure~\ref{fig:bc-kde}) further confirms that higher-order models progressively align with the empirical node frequency distribution. In both attributed and non-attributed networks, the third-order model achieves the closest match.

\begin{figure}[h]
    \centering
    \begin{subfigure}[h]{\linewidth}
        \centering
        \includegraphics{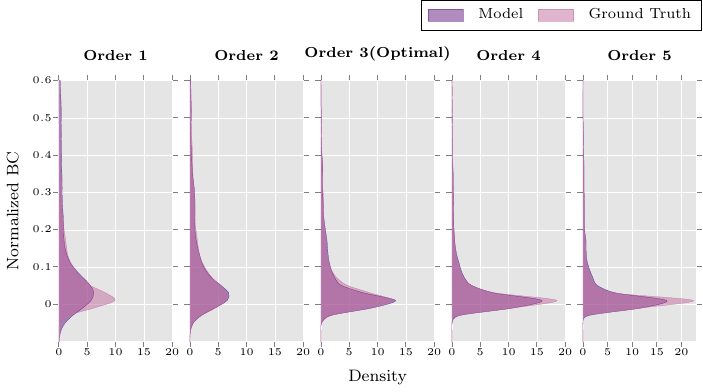}
        \caption{Attributed networks}
        \label{fig:bc-kde-a}
    \end{subfigure}
    \hfil
    \begin{subfigure}[h]{\linewidth}
        \centering
        \includegraphics{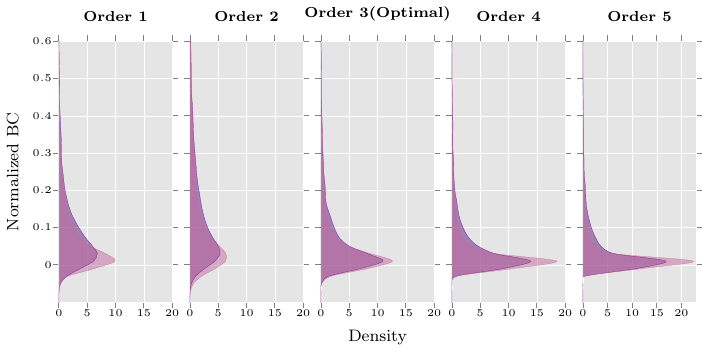}
        \caption{Non-attributed networks}
        \label{fig:bc-kde-b}
    \end{subfigure}
    \caption{Kernel Density Estimation (KDE) and frequency distribution plots of Betweenness Centrality. (a) Attributed networks and (b) Non-attributed networks across network orders (\(k=1\) to \(k = 5\)). Thereby higher-order models progressively align with the node frequency patterns in the observed path data.}
    \label{fig:bc-kde}
\end{figure}

\paragraph{Higher-Order PageRank.}
The third-order model ($k = 3$) demonstrates optimal performance for both attributed and non-attributed networks in terms of Kendall's tau correlation (see Table~\ref{tab:pr-comparison}).
For attributed networks, Kendall's tau correlation increase from 0.498 at first order to 0.617 at third order, and KL divergence decrease from 0.868 to 0.277. 
For non-attributed networks, Kendall's tau correlation similarly improved from 0.206 to 0.321, with KL divergence reducing from 1.541 to 0.907. 
Beyond this optimal order, Kendall's tau slightly decrease for both networks, whereas KL divergence continued to decrease for both attributed and non-attributed networks, suggesting continued alignment with ground truth distributions at higher orders.

\begin{table}[htb]
  \centering
  \footnotesize
    \caption{Comparison of PageRank by Orders}
    \label{tab:pr-comparison}
    \begin{tabular}{c|cc|cc}
    \toprule
    \multirow{2}{*}{\textbf{Order (k)}} &  \multicolumn{4}{c}{\textbf{PageRank}} \\ 
     & \multicolumn{2}{c|}{\textbf{KL Divergence}} & \multicolumn{2}{c}{\textbf{Kendall's Tau}}\\
     & Non-attributed & Attributed & Non-attributed & Attributed \\
    \midrule
    1 & 1.541 & 0.868 & 0.206 & 0.498  \\
    2 & 1.711 & 0.364 & 0.305 & 0.594  \\
    \textbf{3 (Optimal)} & 0.907 & 0.277 & \textbf{0.321} & \textbf{0.617} \\
    4 & 0.420 & 0.211 & 0.315 & 0.560 \\
    5 & \textbf{0.176} & \textbf{0.137} & 0.310 & 0.445 \\
    \bottomrule
    \end{tabular}
\end{table}

Higher-order networks consistently outperform the first-order network across all measures, as illustrated in Figure~\ref{fig:pr-comparison}. 
An exception is that the second-order model showed increased KL divergence (1.711) compared to the first-order model (1.541), suggesting that KL divergence alone might not reliably indicate optimal order in non-attributed scenarios.

\begin{figure}[htb]
    \centering
    \begin{subfigure}[h]{0.49\linewidth}
        \centering
        \includegraphics{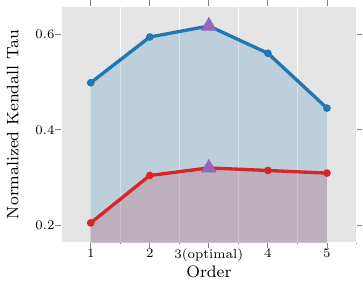}
        \label{fig:pr-comparison-a}
        \caption{Kendall's tau correlation}
    \end{subfigure}
    \hfill
    \begin{subfigure}[h]{0.49\linewidth}
        \centering
        \includegraphics{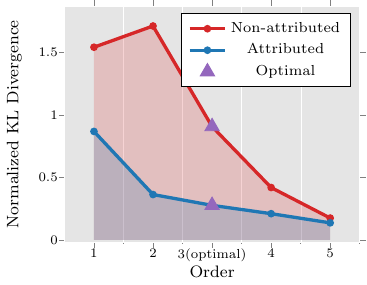}
        \caption{KL divergence}
        \label{fig:pr-comparison-b}
    \end{subfigure}
    \caption{Comparison of network centrality measures across different network orders ($k = 1-5$) illustrating: (a) Kendall's tau correlation and (b) KL divergence between PageRank (PR) values and ground truth node frequencies. Results demonstrate significant performance gains at third order, especially for attributed networks.}
    \label{fig:pr-comparison}
\end{figure}

Attributed networks generally show stronger Kendall's tau correlations compared to non-attributed networks, except in the first-order model (0.868 attributed vs. 1.541 non-attributed), indicating potential issues with path data representation at lower orders.
Kernel density estimates in Figure~\ref{fig:pr-kde} further support the quantitative results, revealing a progressive shift in PageRank distributions toward closer alignment with the empirical node visitation frequencies by increasing the order.

\begin{figure}[htb]
    \centering
    \begin{subfigure}[h]{\linewidth}
        \centering
        \includegraphics{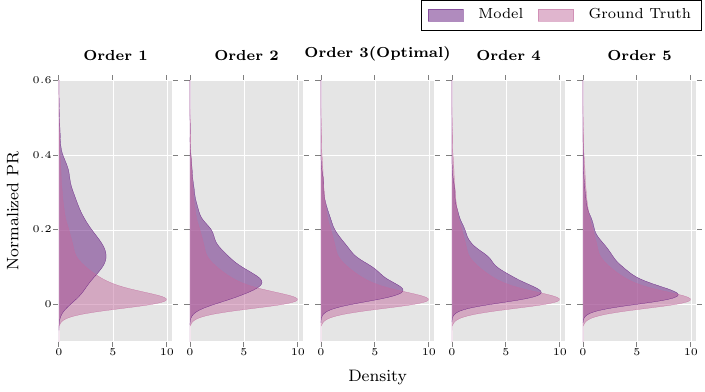}
        \caption{Attributed networks}
        \label{fig:pr-kde-a}
    \end{subfigure}
    \hfill
    \begin{subfigure}[h]{\linewidth}
        \centering
        \includegraphics{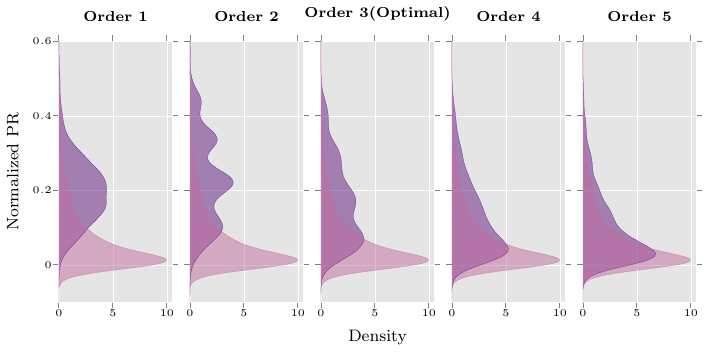}
        \caption{Non-attributed networks}
        \label{fig:pr-kde-b}
    \end{subfigure}
    \caption{Kernel Density Estimation (KDE) and frequency distribution plots of PageRank values. (a) Attributed networks and (b) Non-attributed networks across network orders (\(k=1\) to \(k = 5\)). Thereby the distributional shift in PageRank towards alignment with ground truth frequencies increases with higher network orders.}
    \label{fig:pr-kde}
\end{figure}

\paragraph{Next-Step Prediction.}
In addition to structural metrics, we validate the performance of the higher-order network models through next-step prediction experiments on the Sioux Falls network.
Thereby higher-order models improve prediction accuracy and reduce cross-entropy loss compared to the first-order model (see Table~\ref{tab:ns-comparison}). 
As shown in Figure~\ref{fig:ns-performance}, prediction accuracy increased from 62.1\% at first order to 87.5\% at third-order, eventually reaching 88.\% at fifth-order. 
Correspondingly, cross-entropy loss substantially decreased from 0.728 at first-order to 0.345 at second-order, further reducing to 0.313 at third-order. 
Beyond the third order, cross-entropy loss show an increase to 0.315 at fourth-order, followed by 0.338 at fifth-order, suggesting potential overfitting at higher network orders. 
These results may indicate that the third-order model achieves an balance between predictive accuracy and model complexity.

\begin{figure}[htb]
    \centering
    \includegraphics{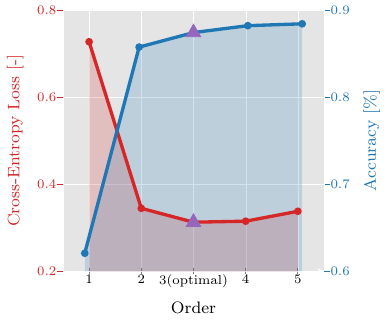}
    \caption{ Next-step prediction accuracy for the Sioux Falls network using higher-order network models. Prediction accuracy improves from 62.1\% at first-order to a peak of 88.5\% at fifth-order, with the third-order model providing a balance between cross-entropy loss and accuracy.}
    \label{fig:ns-performance}
\end{figure}

\begin{table}[htb]
  \centering
  \footnotesize
  \caption{Comparison of Next-Step Prediction (Attributed)}
  \label{tab:ns-comparison}
  \begin{tabular}{c|cc}
    \toprule
    \multirow{2}{*}{\textbf{Order (k)}} & \multicolumn{2}{c}{\textbf{Prediction}} \\ 
     &  \textbf{Los [-]} & \textbf{Accuracy [\%])} \\
    \midrule
    1 & 0.728 & 62.1 \\
    2 & 0.345 & 85.8 \\
    \textbf{3 (Optimal)} & \textbf{0.313} & 87.5 \\
    4 &  0.315 & 88.2 \\
    5 &  0.338 & \textbf{88.5} \\
    \bottomrule
    \end{tabular}
\end{table}

\paragraph{Computational Complexity.}

All experiments are conducted on a workstation equipped with an Intel\textregistered\ Xeon\textregistered\ w5-2465X CPU (32 logical cores, up to 4.7~GHz) and an NVIDIA RTX A2000 12GB GPU.
The CPU supports advanced SIMD instructions and virtualization (VT-x), enabling efficient execution of computationally intensive graph algorithms.
The GPU, running driver version 535.183.01 with CUDA 12.2, is utilized for accelerating selected tasks that benefit from parallelism.
Table~\ref{tab:computational_expense} summarizes the runtime performance for key graph analysis tasks across different network orders.
For betweenness centrality, we report two variants: the attributed implementation computed on the CPU using \texttt{NetworkX} \cite{Hagberg2008Exploring}, and the non-attributed implementation executed on the GPU using \texttt{cuGraph} \cite{NVIDIA2025cuGraph}.
For PageRank, only CPU runtimes are reported.

\begin{table}[h]
  \centering
  \footnotesize
    \caption{Computational expense (in seconds).}
    \label{tab:computational_expense}
    \begin{tabular}{c|cc|cc}
        \toprule
        \multirow{3}{*}{\textbf{Order (k)}} & \multicolumn{2}{c|}{\textbf{Betweenness Centrality}} & \multicolumn{2}{c}{\textbf{PageRank}} \\
      \cmidrule(lr){2-3} \cmidrule(lr){4-5}
                                            & Non-attributed & Attributed & Non-attributed & Attributed \\
                                            & GPU & CPU & CPU & CPU \\

        \midrule
        1  & 20.3766  & 4.7573   & 0.00324 & 0.00669 \\
        2  & 43.3353  & 25.0310  & 0.00517 & 0.01216 \\
        3  & 69.7954  & 80.2084  & 0.01161 & 0.02332 \\
        4  & 101.3077 & 190.1924 & 0.02874 & 0.04338 \\
        5  & 134.9482 & 351.2743 & 0.09691 & 0.09511 \\
        \bottomrule
    \end{tabular}
  \end{table}

\paragraph{Implementation.}  
The construction and analysis of higher-order network models in this study are implemented using the open-source Python library \texttt{pathpyG} \cite{Hackl2021Analysis}, which provides a unified framework for GPU-accelerated higher-order network analytics and learning.  
Specifically designed for processing sequential and time series data on networks, \texttt{pathpyG} supports the efficient construction of de Bruijn-based network representations, encoding of memory-dependent transition structures, and scalable estimation of higher-order Network models.  
The code used to construct the higher-order network models, perform the experiments, and reproduce all results presented in this manuscript is available on GitHub \footnote{GitHub: \url{https://github.com/cisgroup/hon-transportation}. The code will be archived and made permanently available via Zenodo upon acceptance of the manuscript for publication.}.

\section{Discussion}\label{sec:discussion}
%
%
The results presented in this study demonstrate that higher-order network models outperform traditional first-order approaches in accurately capturing structural and dynamic properties of transportation systems.
Across all evaluation tasks, including betweenness centrality, PageRank, and next-step predictions, higher-order representations consistently showed improved alignment with empirical ground truth derived from agent-based mobility data.
Notably, the third-order model achieved optimal performance, indicating that a memory of three previous steps captures the dominant sequential dependencies observed in real-world-like routing behavior.
This suggests that incorporating memory effects into network models leads to a more realistic and data-consistent characterization of mobility dynamics, thereby enhancing the explanatory and predictive power of network-based transportation analyses.

%
By extending classical graph-based methodologies through the integration of memory-dependent dynamics, this work contributes a novel analytical framework to the field of transportation science.
The use of higher-order network models based on Markov chains and de Bruijn graphs allows for the development of centrality measures and flow models that reflect not only topological structure but also the temporally ordered sequences of movement inherent in mobility data.
As a result, traditional concepts such as betweenness centrality and PageRank can be reinterpreted in the context of memory-aware agent movement, enabling the identification of structurally and functionally critical network components with improved fidelity.
In real-world applications where detailed trajectory data is not available, non-attributed models can be used to capture the underlying movement patterns without relying on edge frequency information, offering a lightweight yet effective alternative for data-scarce environments.
Conversely, when empirical path data is available, the framework enables a fully data-driven approach by directly inferring higher-order models through maximum likelihood estimation.
Alternatively, when trajectory data is available, the framework supports a fully data-driven approach by enabling the direct inference of higher-order models through maximum likelihood estimation.
This allows for the construction of memory-aware network representations grounded in observed mobility patterns.
Furthermore, the integration of a statistically rigorous model selection strategy based on likelihood-ratio tests facilitates the adaptive determination of the optimal memory order, thereby avoiding arbitrary choices of model complexity and enhancing model interpretability and generalization.
Taken together, these innovations provide a more generalizable and flexible modeling framework for studying complex transportation systems across a range of spatial and temporal scales, accommodating both rich trajectory datasets and more abstract, data-scarce topology-driven scenarios.

%
The framework developed in this study is highly generalizable and applicable across a broad range of transportation scenarios.
Although evaluated on the Sioux Falls benchmark network, the methodology is compatible with real-world datasets derived from GPS traces, mobile phone records, or public transit smart card data.
Moreover, the approach is readily extendable to multimodal transportation systems, where each mode, such as walking, biking, or transit, can be modeled within its own memory-aware subnetwork, or integrated into a multilayer higher-order model.
Beyond transportation, the concepts presented here have potential applications in other domains involving spatial or temporal pathways, such as supply chain logistics, pedestrian dynamics in buildings, or power flows in energy networks.
In a broader context, the higher-order representations introduced in this work can also serve as an enriched input for graph learning tasks, enabling data-driven models to capture complex dependencies that extend beyond simple pairwise interactions \cite{Qarkaxhija2022Bruijn, Heeg2023Using}.

%
Despite the demonstrated advantages, the proposed framework faces several limitations that warrant careful consideration.
One primary challenge is the scalability of higher-order models, as the number of nodes and edges in the de Bruijn-based representation increases combinatorially with the memory order, leading to increased computational and memory overhead for large-scale networks or high-order representations.
Furthermore, while attributed models that incorporate empirical traversal frequencies improve accuracy, they also introduce the risk of overfitting when data is sparse or unevenly distributed over the system.
The use of synthetic trajectory data, although widely accepted in benchmarking, may not fully capture the diversity of real-world travel behavior, potentially limiting generalizability.
Additionally, higher-order models, while powerful, are less interpretable than their first-order counterparts, especially in terms of traditional graph-theoretic intuition and visualization.
The problem of projecting results from higher-order network structures onto first-order representations, in a manner that retains key dynamic information while enhancing interpretability and communicability, remains an open and challenging research question.

%
Building upon the findings of this study, future research should focus on validating the proposed higher-order network framework using real-world trajectory data from diverse transportation systems.
This would enable a rigorous assessment of the generalizability and practical applicability of memory-aware models in capturing complex mobility patterns and traffic dynamics beyond synthetic benchmarks.
In particular, real-world validation would support the translation of higher-order modeling insights into tangible improvements in transportation planning, congestion mitigation, and infrastructure design.
Another important direction lies in the development of alternative mapping strategies between higher-order and first-order network representations.
While this study relies on projecting higher-order node scores to their terminal components, exploring alternative aggregation schemes, such as weighted averaging or learned mappings, could lead to better alignment with empirical flow patterns and improved performance in task-specific applications, such as routing optimization or vulnerability analysis.
Furthermore, future research could explore methods for inferring higher-order network structures directly from first-order topologies or partially observed data.
Such inference techniques would address challenges associated with data sparsity, enabling the construction of reliable models even in scenarios where complete trajectory datasets are unavailable.
Leveraging non-attributed network structures as informative priors for parameter estimation offers a particularly promising strategy for recovering transition dynamics and estimating critical probabilities that govern agent movement.
In parallel, integrating higher-order network models with advanced machine learning techniques opens up powerful new possibilities.
Graph learning methods such as Variational Graph Auto-Encoders \cite{Kipf2016Variational} in combination with
Transformer \cite{Vaswani2017Attention} based path encoding, could be trained on higher-order representations to uncover latent mobility patterns, learn optimal memory depths, or infer behavioral rules from data.
This integration would enable the automated adaptation of network models to specific urban contexts, improving predictive capabilities and real-time decision-making in transportation systems.
%
%
Finally, advancing the theoretical foundations of higher-order network dynamics remains an important open challenge.
A deeper mathematical understanding of how memory structures influence emergent flow behavior could reveal fundamental principles of transport efficiency, resilience, and optimal control.
For example, studying the structural properties of non-attributed higher-order models may provide insights into naturally emerging optimal transport pathways or invariant flow structures, thus offering a new perspective through which to understand the performance and organization of complex infrastructure systems.

\section{Conclusions}\label{sec:conclusions}
%
%
This study introduced a higher-order network modeling framework for transportation systems that captures memory-dependent dynamics inherent in empirical mobility trajectories.
By extending classical graph-based representations through higher-order Markov chains and de Bruijn-based network structures, the proposed approach integrates spatial-temporal dependencies into both structural and dynamic network analyses.
%
%
The empirical evaluation on the Sioux Falls benchmark network demonstrated that incorporating memory effects substantially improves the accuracy and fidelity of network analytic methods.
Higher-order extensions of centrality measures, such as betweenness centrality and PageRank, yielded rankings that more closely aligned with observed flow distributions and visitation frequencies.
In addition, the predictive accuracy of next-step forecasting tasks improved significantly with increasing model order, with the third-order model achieving an optimal trade-off between model complexity and predictive performance.
%
%
These results highlight the limitations of traditional first-order models in capturing the sequential structure of transportation flows and underscore the importance of memory-aware representations for more realistic modeling of agent behavior.
The proposed higher-order modeling framework enables the adaptive selection of memory length, supporting robust and data-driven construction of network representations that generalize beyond specific datasets or domains.
%
%
Overall, this work establishes a methodological foundation for the application of higher-order network models in transportation science and offers a scalable and generalizable framework for analyzing and predicting mobility dynamics.
Future research will focus on applying this approach to real-world trajectory data, integrating it with graph learning architectures, and exploring its theoretical implications for understanding and optimizing transport systems under uncertainty and dynamical complexity.

\section*{ORCID}

\noindent Chen Zhang - \url{https://orcid.org/0009-0002-7537-8414}

\noindent Jürgen Hackl - \url{https://orcid.org/0000-0002-8849-5751}

\newpage
\bibliographystyle{ws-acs}
\bibliography{bibliography}

\newpage
\appendix
\section{Nomenclature}
\begin{table}[h!]
\centering
\resizebox{\textwidth}{!}{
\begin{tabular}{|l|l|}
\hline
\multicolumn{2}{|c|}{\textbf{Network Notation}} \\
\hline
$\bm{G}$ & Graph or network \\
$\mathcal{V}$ & Set of nodes (vertices) \\
$\mathcal{E}$ & Set of edges (links) \\
$(v,w)$ & Directed edge from node $v$ to node $w$ \\
$\mathrm{d}(v,w)$ & Distance between nodes $v$ and $w$ \\
$\mathrm{w}(v_{i}, v_{i+1})$ & Weight of the edge between nodes $v_i$ and $v_{i+1}$ \\
\hline
\multicolumn{2}{|c|}{\textbf{Path and Markov Chain}} \\
\hline
$p$ & Path in a network \\
$v_i$ & Node in the path $p$ \\
$X_t$ & State of the walker at time $t$ \\
$\Omega$ & State space of the random walk \\
$P ( X_{t+1} = s_{t+1} \mid X_t=s_{t} )$ & Transition probability in a random walk \\
$\bm{T}$ & Transition matrix \\
$\bm{A}$ & Adjacency matrix \\
$k$ & Order of the Markov chain \\
$P^{(k)}(s_t \mid s_{t-1}, \ldots, s_{t-k})$ & Transition probability in a $k$th-order Markov chain \\
\hline
\multicolumn{2}{|c|}{\textbf{Higher-Order Network}} \\
\hline
$v^{(k)}, w^{(k)}$ & Higher-order nodes (sequence of $k$ consecutive nodes) \\
$L(G)$ & Line graph of the original network $G$ \\
$L^2(G)$ & Second-order line graph of $G$ ($L(L(G))$) \\
$\text{w}\bigl(\langle v_{1}, \ldots, v_{k} \rangle,\ \langle v_{2}, \ldots, v_{k+1} \rangle\bigr)$ & Weight of a higher-order edge connecting consecutive node sequences \\
\hline
\multicolumn{2}{|c|}{\textbf{Multi-Order Model and Likelihood}} \\
\hline
$\bar{P}^{(K)}(p)$ & Likelihood of path $p$ under a multi-order Markov model \\
$\mathcal{L}(M_k \mid p)$ & Likelihood of observing a path $p$ under model $M_k$ \\
$\mathcal{P}$ & Multi-set of statistically independent paths \\
$\widehat{P}^{(k)}(v_i \mid v_{i-1}, \ldots, v_{i-k})$ & Estimated transition probability in a $k$th-order Markov chain \\
\hline
\multicolumn{2}{|c|}{\textbf{Betweenness Centrality}} \\
\hline
$C_B(v)$ & Betweenness centrality of a node $v$ \\
$\sigma_{st}$ & Total number of shortest paths between nodes $s$ and $t$ \\
$\sigma_{st}(v)$ & Number of shortest paths between $s$ and $t$ passing through $v$ \\
$C_B^{(k)}(v)$ & Higher-order betweenness centrality of a node $v$ \\
$\sigma_{st}^{(k)}$ & Total number of shortest paths in the $k$th-order network \\
$\sigma_{st}^{(k)}(v)$ & Number of shortest paths in the $k$th-order network passing through $v$ \\
\hline
\multicolumn{2}{|c|}{\textbf{PageRank}} \\
\hline
$\bm{r}$ & PageRank vector \\
$\alpha$ & Teleportation (damping) factor \\
$\bm{E}$ & Matrix of all ones \\
$\bm{r}^{(k)}$ & Higher-order PageRank vector \\
$\tilde{r}(v)$ & Aggregated first-order PageRank score for node $v$ \\
\hline
\end{tabular}
}
\label{table:notation}
\end{table}

\newpage
\section{Evaluation Methods}\label{app:evaluation}

\subsection{Kendall’s Rank Correlation \cite{Kendall1938New}}

Kendall’s rank correlation coefficient, denoted by \( \tau \), is a non-parametric test that measures the ordinal association between two measured quantities. This study compares the betweenness centrality produced by higher-order network models with ground truth data based on node frequency. The Kendall’s \( \tau \) coefficient is calculated as:

\[
\tau = \frac{2(C - D)}{n(n-1)}
\]

where \( C \) represents the number of concordant pairs, which are pairs where the ranking order of both elements agrees in both datasets, and \( D \) denotes the number of discordant pairs, where the ranking order of elements disagrees between the datasets. The variable \( n \) is the total number of pairs being compared.

Kendall’s \( \tau \) ranges from $-1$, indicating complete disagreement, to $+1$, indicating complete agreement, with 0 representing no correlation. This metric is particularly useful for validating the rank order of nodes as determined by higher-order models against the observed ground truth.

\subsection{Kullback-Leibler Divergence \cite{Kullback1951Information}}
The Kullback-Leibler (KL) divergence, also known as relative entropy, is a measure of how one probability distribution diverges from a second expected probability distribution. It is defined as:
\[
D_{KL}(P || Q) = \sum_{i} P(i) \log\left(\frac{P(i)}{Q(i)}\right)
\]
where \( P \) and \( Q \) are two probability distributions over the same variable. The KL divergence is always non-negative and is equal to zero if and only if the two distributions are identical. It is particularly useful in information theory and statistics for measuring the difference between two probability distributions.

\section{Optimal Order Detection for Multi-order Networks\cite{Scholtes2017When}}
\label{app:optimal-order}
Multi-order networks integrate multiple Markov chain orders into a single framework. In a multi-order model with maximum order \(k\), the probability of a path 
\[
p = \left(v_1, v_2, \ldots, v_{\ell}\right)
\]
is expressed as
\begin{equation}
    \begin{aligned}
    \bar{P}^{(k)}(p)=\;& \prod_{i=1}^{k} P^{(i)}\left(v_i \mid v_1, \ldots, v_{i-1}\right) \\
    &\times \prod_{j=k+1}^{\ell} P^{(k)}\left(v_j \mid v_{j-k}, \ldots, v_{j-1}\right).
    \end{aligned}
\end{equation}
The first product captures transitions using models of increasing order (from 1 up to \(k\)), while the second product applies the highest-order model for transitions beyond the initial \(k\) states. This multi-layered approach enables the model to flexibly capture both short-term and long-term dependencies in the data.

Determining the appropriate memory length in transportation networks is critical because these networks exhibit dependencies at multiple scales; some transitions depend on only the immediate past, while others are influenced by a longer history. To detect the optimal order \(k_{\text{opt}}\), we model the probability of each observed path using the multi-order model. For a set of \(N\) paths \(\mathcal{P}=\{p_j\}_{j=1}^N\), the likelihood under the \(k\)-th order model is given by
\[
\mathcal{L}\left(\bar{M}_k \mid \mathcal{P}\right)=\prod_{j=1}^N \bar{P}^{(k)}\left(p_j\right).
\]
However, simply maximizing this likelihood (via maximum likelihood estimation) may lead to overfitting, as higher-order models contain more parameters. To balance model complexity and explanatory power, we compare nested multi-order models \(\bar{M}_k\) (null model) and \(\bar{M}_{k+1}\) (alternative model) using a Likelihood Ratio Test (LRT). The null hypothesis is that \(\bar{M}_k\) sufficiently explains the data, while the alternative hypothesis posits that \(\bar{M}_{k+1}\) provides a significantly better fit. We compute the likelihood ratio statistic
\begin{equation}
\Lambda=-2 \ln \frac{\mathcal{L}\left(\bar{M}_k \mid \mathcal{P}\right)}{\mathcal{L}\left(\bar{M}_{k+1} \mid \mathcal{P}\right)},
\end{equation}
which, under Wilks’ theorem, asymptotically follows a \(\chi^2\)-distribution with degrees of freedom \(\Delta d = d(k+1) - d(k)\). Here, the degrees of freedom are adjusted to account for feasible paths in the graph:
\begin{equation}
d(k)=(|\mathcal{V}|-1)+\sum_{i=1}^k\Biggl[\#\text{ paths of length } i - \#\text{ non-zero rows in } \bm{A}^i\Biggr],
\end{equation}
where \(\bm{A}^i\) is the \(i\)-th power of the adjacency matrix and the non-zero rows are counted via the Heaviside step function.

We then compute the p-value as
\begin{equation}
p=1-\frac{\gamma\left(\frac{\Delta d}{2},\frac{\Lambda}{2}\right)}{\Gamma\left(\frac{\Delta d}{2}\right)},
\end{equation}
where \(\gamma(s,x)\) is the lower incomplete Gamma function and \(\Gamma(s)\) is the complete Gamma function. By systematically comparing models of increasing order and selecting the largest \(k\) for which \(p<\epsilon\) (e.g., \(\epsilon=0.05\)), we obtain an optimal memory length \(k_{\text{opt}}\) that avoids both underfitting and overfitting. This data-driven approach ensures a balanced and robust model for capturing the dynamics in transportation networks.

\end{document}